\titleformat{\section}{\bfseries}{}{0pt}{\uppercase}
\titlespacing*{\section}{0pt}{12pt}{*0}
\titleformat{\subsection}{\bfseries}{}{0pt}{}
\titlespacing*{\subsection}{0pt}{12pt}{*0}
\titleformat{\subsubsection}{\itshape}{}{0pt}{}
\titlespacing*{\subsubsection}{0pt}{12pt}{*0}
\setlist[1]{labelindent=0.5in,leftmargin=*}
\setlist[2]{labelindent=0in,leftmargin=*}
\renewcommand{\fnum@figure}{\textbf{FIGURE~\thefigure} }
\renewcommand{\fnum@table}{\textbf{TABLE~\thetable} }
	\newcommand{\trbcite}[1]{\citeauthor{#1} ({\it \citenum{#1}})}
\theoremstyle{plain}
\newtheorem{theorem}{Theorem}[section]
\newtheorem{lemma}[theorem]{Lemma}
\let\OLDthebibliography\thebibliography
\renewcommand\thebibliography[1]{
  \OLDthebibliography{#1}
  \setlength{\parskip}{3pt}
  \setlength{\itemsep}{0pt plus 0.3ex}
}
\NewDocumentCommand{\floor}{s O{} m}{%
  \IfBooleanTF{#1} 
    {\left\lfloor#3\right\rfloor} 
    {#2\lfloor#3#2\rfloor} 
}
\begin{document}

	\thispagestyle{empty}

\begin{titlepage}
\begin{flushleft}

{\LARGE \bfseries Modifying the service patterns of public transport vehicles to account for the COVID-19 capacity}\\[1cm]

Dr Konstantinos Gkiotsalitis \\
Assistant Professor\\
University of Twente\\[0.2cm]
Department of Civil Engineering\\
P.O. Box 217\\
7500 AE Enschede\\
The Netherlands\\
Email: k.gkiotsalitis@utwente.nl\\ [1cm]


\today

\end{flushleft}
\end{titlepage}

\newcommand{\Spvek}[2][r]{%
  \gdef\@VORNE{1}
  \left(\hskip-\arraycolsep%
    \begin{array}{#1}\vekSp@lten{#2}\end{array}%
  \hskip-\arraycolsep\right)}

\def\vekSp@lten#1{\xvekSp@lten#1;vekL@stLine;}
\def\vekL@stLine{vekL@stLine}
\def\xvekSp@lten#1;{\def\temp{#1}%
  \ifx\temp\vekL@stLine
  \else
    \ifnum\@VORNE=1\gdef\@VORNE{0}
    \else\@arraycr\fi%
    #1%
    \expandafter\xvekSp@lten
  \fi}

\newpage

\thispagestyle{empty}
\section*{Abstract}
As public transport operators try to resume their services, they have to operate under reduced capacities due to COVID-19. Because demand can exceed capacity at different areas and across different times of the day, drivers have to refuse passenger boardings at specific stops. Towards this goal, many public transport operators have modified their service routes by avoiding to serve stops with high passenger demand at specific times of the day. Given the urgent need to develop decision support tools that can prevent the overcrowding of vehicles, this study introduces a dynamic integer nonlinear program that proposes service patterns to individual vehicles that are ready to be dispatched. In addition to the objective of satisfying the imposed vehicle capacity due to COVID-19, the proposed service pattern model caters for the waiting time of passengers. Our model is tested in a bus line connecting the university of Twente with its surrounding cities demonstrating the improvement in terms of vehicle overcrowding, and analyzing the potential negative effects related to unserved passenger demand and excessive waiting times.
 \\\\
\newline
\textbf{Keywords}: public transport; service patterns; stop-skipping; COVID-19; pandemic capacity.

\newpage
\section{Introduction}\label{sec1}
After the start of the pandemic, one country after the other implemented so-called social distancing measures affecting public transport, schools, shops, working places, and various other sectors \trbcite{anderson2020will,lewnard2020scientific}. To adjust their operations, some public transport service providers permitted the use of public transport for essential travel only (e.g., California and several other states in the US, Asia, and Europe) \trbcite{rodriguez2020going}. Given that several studies underline the increased risk of virus transmission when using public transport (see (Zhen et al., 2020)), public transport trips are considered of high transmission risk. Thus, several office workers are asked to work from home as much as possible to reduce the burden in public transport services and ensure the service availability for essential workers and vulnerable user groups.

Typical pandemic-related measures taken by public transport service providers include the limitation of the service span (e.g., not offering night services), the cancellation of certain lines, and the closure of selected stations generating new service patterns \trbcite{tirachini2020covid}. Namely, Transport for London (TfL) suspended the night tube service and closed 40 metro stations that do not interchange with other lines \trbcite{TfL2020}. Similarly, the Washington Metropolitan Area Transit Authority (WMATA) closed more than 20\% of its metro stations, reduced its service frequencies by more than half, and limited the operations of the daily metro services until 9 pm \trbcite{WMATA2020}. In Italian cities, such as Rome and Naples, services are running with reduced frequencies and early closures at 9 pm and 8 pm, respectively. Valencia in Spain has also seen a reduction in service provision of up to 35\% \trbcite{UITP2020b}. 

Implementing specific measures to ensure social distancing is the main concern of public transport operators in post-lockdown societies.
Several operators receive specific instructions from government authorities to operate under a pandemic-imposed capacity that does not allow them to use all available space inside a vehicle. This pandemic-imposed capacity aims to maintain sufficient levels of physical distancing among travelers but, at the same time, this might lead to significant numbers of unserved passengers and route/frequency changes. A recent study by \trbcite{krishnakumari2020virus} about the Washington DC metro system showed that if passengers are evenly spaced across platforms, each operating train can carry only 18\% of its nominal capacity when implementing a 1.5-meter distancing and 10\% when implementing a 2-meter distancing. \trbcite{gkiotsalitis2020optimal} showed that the average train occupancy in the Washington metro can be reduced to 11.6\%, 8.7\%, and 6.5\% when implementing 1-meter, 1.5-meter, and 2-meter social distancing policies, respectively. \trbcite{UITP2020a} also reported that to ensure necessary social distancing between 1 and 1.5-meters the transport capacity has to be reduced to 25\%-35\%, which would hardly allow accommodating travel demand. 

As part of their exit strategies, public transport authorities and operators have to devise strategies to meet the pandemic-imposed capacity limits. As of now, service providers have made adjustments to meet the pandemic capacity, but so far those adjustments are devised and implemented in an ad-hoc manner \trbcite{UITP2020b}. Typical measures include changes in service patterns by closing specific stops of the network that lead to overcrowding and adjustments of service frequencies. To rectify this, in this study we propose a dynamic service pattern model that decides about the skipped stops of every vehicle that is about to be dispatched to operate a particular trip. Our dynamic model can suggest a different service pattern for each vehicle using up-to-date passenger demand information to determine which stops should be served and which stops should be skipped for maintaining the passenger load below the pandemic-imposed capacity limit. By doing this, we exploit our vehicle resources as much as possible since we devise trip-specific service patterns instead of resorting to permanent stop closures. Besides, we maintain a level of service at all stops since stops that are not served by a vehicle will be served by a subsequent one limiting the passenger-related effects to increased waiting times. Our dynamic service pattern model can be applied every time a vehicle is about to be dispatched and can be solved in near real-time for realistic public transport lines that serve up to 60 stops.  

The remainder of this study is structured as follows: section \ref{sec2} provides a literature review on service patterns and stop-skipping models. Section \ref{sec3} introduces our integer nonlinear service pattern model and examines its computational complexity. Section \ref{sec4} provides the implementation of our model in a bus line that connects the university of Twente with its two surrounding cities using multiple passenger demand scenarios. Finally, section \ref{sec5} concludes our work and offers future research directions, including the possibility of combining our dynamic service pattern model with dynamic frequency setting models to deploy more vehicles at peak periods of the day within which the pandemic-imposed capacity cannot be maintained by merely skipping stops.

\section{Literature Review and contribution}\label{sec2}
Devising the service pattern of a vehicle at the operational level (e.g., when it is about to be dispatched) requires to determine which stops of the line should be served and which should be skipped by that vehicle (see \trbcite{li1991real,lin1995adaptive,eberlein1997real,fu2003real}). Determining the service pattern for each vehicle in isolation reduces the problem complexity and, similarly to our study, several works resort to exhaustive search methods (brute-force) to solve the dynamic problem taking advantage of the relatively small scale of the problem since typical public transport lines operate less than 40 stops \trbcite{fu2003real,sun2005real}.

The dynamic service pattern problem, which can be seen as a dynamic stop-skipping problem where all skipped stops are determined by the time the vehicle is dispatched, is typically modeled as a nonlinear integer program including assumptions of random distributions of boardings and alightings \trbcite{sun2005real}. Similar to our work, \trbcite{fu2003real} used an exhaustive search to determine the skipped stops of one trip at a time. \trbcite{fu2003real} considered the total waiting times of passengers, the in-vehicle time, and the total trip travel time as problem objectives. The potential benefit was tested with a simulation of route 7D in Waterloo, Canada. 

Different model formulations, such as \trbcite{liu2013bus}, imposed stricter stop-skipping constraints so that if a trip skips one stop, its preceding and following trip should not skip any stops. \trbcite{liu2013bus} resorted to the use of a genetic algorithm incorporating Monte Carlo simulations because of the complexity of the formulated mixed-integer nonlinear problem. \trbcite{eberlein1995phd} simplified the problem by modeling it as an integer nonlinear program with quadratic objective function and constraints enabling its analytic solution.

The dynamic service patterns problems can be solved deterministically with high accuracy because the decisions are made every time a vehicle is about to be dispatched and the up-to-date information regarding the expected passenger demand and the inter-station travel times results in low estimation errors of the realized demand and travel times. Several works, however, do not solve this problem dynamically and consider stochastic travel times and passenger demand in the problem formulation because they make decisions about future daily trips with highly uncertain travel times and passenger demand levels. \trbcite{chen2015design} used an artificial bee colony heuristic to solve the offline service pattern problem considering stochasticity since this work determined the service patterns of several vehicles ahead. \trbcite{gkiotsalitis2019robust} used also a robust optimization model for devising the service patterns of all daily trips considering the stochastic travel-times in the objective function and integrating them into a genetic algorithm that tried to find service patterns that perform reasonably at worst-case scenarios. Service patterns have also been derived for clusters of trips at the offline level \trbcite{verbas2015exploring,verbas2015stretching,gkiotsalitis2019cost}. The aforementioned works do not exploit the real-time information from telematics systems and automated passenger counts and are not in-line with the objectives of our work that needs to make well-informed decisions using up-to-date information to avoid vehicle overcrowding. Therefore, offline approaches that determine service patterns or combine service patterns with offline timetabling and offline vehicle scheduling (e.g., \trbcite{li1991real,cao2016autonomous,gao2016rescheduling,altazin2017rescheduling,cao2019autonomous}) will not be the main focus of this study. 

In past works, stop-skipping has been combined with vehicle holding (see \trbcite{eberlein1995phd,lin1995adaptive,cortes2010hybrid,
 saez2012hybrid,nesheli2015robust}
or the recent work of \trbcite{zhang2020agent}). However, such works have different objectives compared to the objectives of our model, such as improving the service regularity by implementing vehicle holding or reducing the in-vehicle travel times, which might be counterproductive when trying to maintain a pandemic-imposed capacity (e.g., when a vehicle is held more passengers will arrive at the stop and will be willing to board the vehicle leading to higher crowding levels).

From the current literature, we identify a main research gap. Whereas there is an extensive body of works on service pattern models, these works focus on improving the in-vehicle travel times, the waiting times of passengers at stops, or the service regularity. To the best of the author's knowledge, there are no works, particularly at the dynamic level, that consider the capacity limitations of vehicles when devising service patterns. Hence, they do not account for the (potential) negative effect of overcrowding that increases the risk of virus transmission. This motivates our work which proposes a novel model formulation that explicitly considers the pandemic-imposed capacity limit as the main problem objective with the use of extensive penalties for vehicles that not meet this limit when departing from a stop.

With the introduction of our model, our work addresses the following research questions:
\begin{itemize}
\item[(a)] how much can we improve the crowding levels inside vehicles to meet the pandemic-imposed capacity when changing the service patterns of vehicles?
\item[(b)] what are the side effects in terms of unserved passenger demand and increased passenger waiting times when applying such service patterns?
\end{itemize}

\section{Service Pattern Model Formulation}\label{sec3}
Our service pattern model determines the service pattern of every vehicle $n$ that is about to be dispatched. The service pattern decision is made before the vehicle is dispatched so as to inform the passengers waiting at stops about the skipped stops by that vehicle. It is important to note that when vehicle $n$ is about to be dispatched, its preceding vehicle $n-1$ is already operating its service under its pre-determined service pattern. That is, every time we determine the service pattern of a trip $n$, the service pattern of the previously dispatched vehicle $n-1$ is taken into consideration. 

The modeling part of this work relies on the following
assumptions:
\begin{itemize}
\item[(1)] The passenger arrivals at stops are random because the passengers
cannot coordinate their arrivals with the arrival times
of buses at high-frequency services \trbcite{welding1957instability,randall2007international};
\item[(2)] Passengers traveling between any origin-destination pair cannot be skipped by two consecutive vehicles, even if that means that we cannot satisfy the pandemic-imposed capacity limit.
\end{itemize}

Before proceeding to the model formulation, we introduce the following nomenclature:
\newline

\small
\noindent \begin{tabular}{p{1.2cm}p{12.2cm}}
\textbf{NOMENCLATURE}\vspace{4pt} \\
\textbf{Sets}\vspace{4pt} \\
{$S$}&{set of ordered stops of the service line, $S=\langle 1,...,s,...,|S|\rangle$;}\\\\
\textbf{Indices}\vspace{4pt} \\
$n$&vehicle that is about to be dispatched;\\
$n-1$&preceding vehicle of vehicle $n$ that is already dispatched and has a fixed service pattern;\\\\
\textbf{Parameters}\vspace{4pt} \\
$t_{n,s}$&  the expected running time of vehicle $n$ between stop $s-1$ and $s$;\\
$g_n$ & the pandemic-imposed capacity of vehicle $n$;\\
{$r_1$}&{average boarding time per passenger, a constant;}\\
{$r_2$}&{average alighting time per passenger, a constant;}\\
{$\delta$}&{average vehicle acceleration plus deceleration time for serving a stop;}\\
$\lambda_{sy}$ & the average passenger arrival rate at stop $s$ whose destination is stop $y$ (note: $\lambda_{sy}=0,~\forall 1\leq y\leq s$);\\
$M$&a very large positive number that penalizes the exceedance of the pandemic-imposed capacity\\
$\tilde{d}_{n,1}$&planned dispatching time of trip $n$;\\
$d_{n-1,s}$&expected departure time of vehicle $n-1$ from stop $s$;\\
$\tilde{x}_{n-1,s}$&$\tilde{x}_{n-1,s}=0$ if the already dispatched vehicle $n-1$ that precedes vehicle $n$ will skip stop $s\in S$, and $\tilde{x}_{n-1,s}=1$ otherwise;\\\\
\multicolumn{2}{l}{\textbf{Decision Variables}}\vspace{4pt} \\
$x_{n,s}$ & $x_{n,s}=1$ if vehicle $n$ will be allowed to serve stop $s$ and $x_{n,s}=0$ otherwise.\\\\
\multicolumn{2}{l}{\textbf{Variables}}\vspace{4pt} \\
$d_{n,s}$& the departure time of vehicle $n$ from stop $s$;\\
$a_{n,s}$&the arrival time of vehicle $n$ at stop $s$;\\
$k_{n,s}$&the dwell time of trip $n$ at stop $s$\\
$h_{n,s}$&is the time headway between the departure of trip $n-1$ and the arrival of trip $n$ at stop $s$\\
$w_{n,sy}$&the number of passengers waiting for bus $n$ and traveling from stop $s$ to $y$ (note:  $w_{n,sy}=0,~\forall y\leq s$);\\
$l_{n,sy}$& the number of passengers traveling from stop $s$ to stop $y$ skipped by vehicle $n$;\\
\end{tabular}

\noindent \begin{tabular}{p{1.2cm}p{12.2cm}}
$m_{n,s}$& the number of passengers at stop $s$ skipped by bus $n$ (note: $m_{n,s}=\sum\limits_{i=s+1}^{\mid S\mid}l_{n,si}$);\\
$u_{n,s}$& denotes the number of passengers boarding bus $n$ at stop $s$;\\
$b_{n,sy}$ & the number of passengers boarding bus $n$ at stop $s$ whose destination is stop $y$ (note: $b_{n,sy}=0,~\forall y\leq s$);\\
$\nu_{n,s}$ &the number of passengers alighting bus $n$ at stop $s$, where $n\in N, s\in S$ (note: $\nu_{n,1}=0,~\forall n\in N$);\\
$\mu_{s} $ & the average passenger arrival rate at stop $s$ (note: $\mu_s=\sum\limits_{i=s+1}^{\mid S\mid}\lambda_{si}$).\\ 
$\gamma_{n,s}$ &the passenger load of vehicle $n$ when traveling from stop $s$ to stop $s+1$.\\ 
\end{tabular}
\newline\newline

\normalsize

The service pattern model that decides about the skipped stops by vehicle $n$ is presented in Eqs.\eqref{eqobj}-\eqref{eq20}. Our model is based on the vehicle movement ideas expressed in \trbcite{fu2003real,gkiotsalitis2020stop}, which are expanded to cater for the pandemic capacity limit and to be able to make dynamic service pattern decisions for each individual vehicle.

\begin{align}
\min f(\textbf{x}):= &~M\sum_{s=1}^{|S|-1}\max\{0,(\gamma_{n,s}-g_n)\}+\nonumber\\&~~~~~~~~\sum_{s=1}^{|S|-1}\Big[ (u_{n,s}-m_{n-1,s})\frac{h_{n,s}}{2}+m_{n,s}\Big(\frac{h_{n,s}}{2}+k_{n,s}+h_{n+1,s}\Big)\Big] \label{eqobj}\\
\text{subject to: }~l_{n,sy}=& \begin{cases}0,\text{ if }y\leq s\\
w_{n,sy}-w_{n,sy}x_{n,s}x_{n,y},\text{ if }y> s
\end{cases} \label{eq1}\\
m_{n,s}=& \sum_{y=s+1}^{|S|}l_{n,sy},~\forall s\in S\setminus\{|S|\}\label{eq2}\\
w_{n,sy}=& 
l_{n-1,sy}+\lambda_{sy}h_{n,s},~\forall s\in S\setminus\{|S|\} \label{eq3}\\ 
u_{n,s}=& x_{n,s}\sum_{y=s+1}^{|S|}w_{n,sy}x_{n,y},~\forall s\in S\setminus\{|S|\})\label{eq4}\\
b_{n,sy}=& \begin{cases}
x_{n,s}w_{n,sy}x_{n,y},\text{ if }y>s\\
0,\text{ if }y\leq s
\end{cases}\label{eq6}\\
\nu_{n,s}=& x_{n,s} \sum\limits_{y=1}^{s-1}w_{n,ys}x_{n,y},~\forall s\in S\setminus\{1\} \label{eq7}\\
k_{n,s}=& \max\big(r_{1} u_{n,s},r_{2} \nu_{n,s}\big) \label{eq9}\\
a_{n,s}=&\begin{cases} d_{n,s-1}+t_{n,s}+\frac{\delta}{2}(x_{n,s-1}+x_{n,s}),~\forall s\in S\setminus\{1,2\}\\
\tilde{d}_{n,s-1}+t_{n,s}+\frac{\delta}{2}(x_{n,s-1}+x_{n,s}),~s=2\end{cases} \label{eq10}\\
d_{n,s}=& a_{n,s}+k_{n,s},~\forall s\in S\setminus\{1\} \label{eq12}\\
h_{n,s}=& \begin{cases}a_{n,s}-d_{n-1,s},~\forall s\in S\setminus\{1\}\\
\tilde{d}_{n,s}-\tilde{d}_{n-1,s},~s=1\end{cases}\label{eq13}\\
x_{n,1}=&1\label{eq18}\\
x_{n,|S|}=&1\label{eq18b}\\
(\tilde{x}_{n-1,s}\tilde{x}_{n-1,y})+(x_{n,s}x_{n,y})\geq& 1,~\forall s\in S,~\forall y\geq s\label{eq19}\\
 x_{n,s}\in \{0,1\}~&,~\forall s\in S\label{eq20}
\end{align}

The objective function \eqref{eqobj} consists of two components. The first component, $$M\sum_{s=1}^{|S|-1}\max\{0,(\gamma_{n,s}-g_n)\}$$ penalizes in-vehicle passenger loads $\gamma_{n,s}$ that are higher that the pandemic-imposed capacity $g_n$. Namely, if at stop $s\in S$ the passenger load is lower than $g_n$, then $\gamma_{n,s}-g_n<0$ and $M\max\{0,(\gamma_{n,s}-g_n)\}=0$ leading to no additional penalty to the objective function. If, however, at stop $s$ the load is greater than capacity $g_n$, then $\gamma_{n,s}-g_n>0$ and $M\max\{0,(\gamma_{n,s}-g_n)\}=M(\gamma_{n,s}-g_n)$ penalizing the objective function with the excessive penalty $M(\gamma_{n,s}-g_n)$. This excessive penalization directs the objective function towards finding service patterns that do not lead to loads that exceed the pandemic-imposed capacity. 

The second component of the objective function, $$\sum_{s=1}^{|S|-1}\Big[ (u_{n,s}-m_{n-1,s})\frac{h_{n,s}}{2}+m_{n,s}\Big(\frac{h_{n,s}}{2}+k_{n,s}+h_{n+1,s}\Big)\Big]$$ computes the total waiting time of passengers who arrive after the departure of vehicle $n-1$ from stop $s$, ($(u_{n,s}-m_{n-1,s})\frac{h_{n,s}}{2}$), plus the total waiting time of those passengers who have been stranded by vehicle $n$ ($m_{n,s}$) and have to wait for an average amount of time equal to $m_{n,s}\Big(\frac{h_{n,s}}{2}+k_{n,s}+h_{n+1,s}\Big)$ for the following vehicle $n+1$. Note that this formulation implies that a stranded passenger by vehicle $n$ will be always served by the next vehicle, $n+1$, which will be enforced by constraint \eqref{eq19} when we have to decide about the service pattern of vehicle $n+1$.

Constraint \eqref{eq1} returns the number of passengers destined to stop $y$ who are stranded by vehicle $n$ at stop $s, (l_{n, sy})$. $l_{n, sy}$ is equal to 0 if vehicle $n$ serves stops $s$ and $y$. Otherwise, $l_{n, sy}$ is equal to the number of passengers waiting for vehicle $n$
at stop $s$ and have stop $y>s$ as their destination. Constraint \eqref{eq2} returns the number of passengers at stop $s$ skipped by vehicle $n$. Constraint \eqref{eq3} returns the number of passengers waiting for vehicle $n$ at stop $s$ whose destination is stop $y>s$.

Constraint \eqref{eq4} returns the expected number of passengers who will board vehicle $n$ at stop $s$ (assuming that stop $s$ belongs to its service pattern) and depends on the number of passengers traveling between stops $s$ and $y>s$ and whether the vehicle will stop at stop $y$. Constraint \eqref{eq6} returns the total amount of passengers boarding vehicle $n$ at stop $s$ whose destination is stop $y>s$. Constraint \eqref{eq7} returns the expected number of alighting passengers at stop $s$ depending on whether vehicle $n$ will serve that stop. Constraint \eqref{eq9} returns the dwell time at stop $s$ and depends on the number of passengers that will board and alight at the stop, denoted by $u_{n,s}$ and $\nu_{n,s}$, respectively. Note that the dwell time constraint implies that passengers use different door channels for boardings and alightings because we do not want high concentration of passengers at the same door channel.

Constraint \eqref{eq10} returns the arrival time of vehicle $n$ at every stop depending on the inter-station travel times and the service pattern choice. Constraint \eqref{eq12} returns the departure time from stop $s$ by adding the dwell time to the arrival time. Constraint \eqref{eq13} returns the headway between vehicles $n$ and $n-1$ at each stop. This headway is the time difference between the arrival time of vehicle $n$ at stop $s$ and the departure time of its preceding trip $n-1$ from the same stop.
 
Constraints \eqref{eq18}-\eqref{eq18b} ensure that our service pattern will include the first and last stops of the line. Constraint \eqref{eq19} ensures that passengers traveling between any origin-destination pair cannot be skipped by two consecutive vehicles (see assumption (2)). Finally, constraint \eqref{eq20} ensures that our decision variable is binary (that is, we can either skip or serve a stop $s$).

Our service pattern model expressed in Eqs.\eqref{eqobj}-\eqref{eq20} considers the pandemic-imposed capacity limit in its objective function and it is an integer nonlinear programming problem (INLP) that needs to be solved every time a new vehicle is about to be dispatched. Due to its combinatorial nature, the problem can be solved to \textit{global optimality} with exhaustive search of the solution space. Exploring the entire solution space with brute force results in exponential computational complexity. This is formally proved in lemma \ref{theorem1}.

\begin{lemma}\label{theorem1}
The dynamic service pattern problem subject to a pandemic-imposed capacity has an exponential computational complexity that requires to explore $2^{|S|-2}$ potential solutions.
\end{lemma}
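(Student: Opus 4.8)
The plan is to reduce the statement to a straightforward counting argument over the feasible assignments of the binary decision vector, together with an observation that an exhaustive procedure cannot avoid inspecting all of them. First I would note that the only genuine decision variables in \eqref{eqobj}--\eqref{eq20} are the indicators $x_{n,s}\in\{0,1\}$, $s\in S$: every other quantity appearing in the model ($d_{n,s}$, $a_{n,s}$, $k_{n,s}$, $h_{n,s}$, $w_{n,sy}$, $l_{n,sy}$, $m_{n,s}$, $u_{n,s}$, $b_{n,sy}$, $\nu_{n,s}$, $\gamma_{n,s}$) is determined as a function of the vector $\mathbf{x}=(x_{n,1},\dots,x_{n,|S|})$ and of fixed parameters through the defining relations \eqref{eq1}--\eqref{eq13}. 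Hence a candidate service pattern for vehicle $n$ is completely specified by a point of $\{0,1\}^{|S|}$.

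Next I would apply constraints \eqref{eq18} and \eqref{eq18b}, which force $x_{n,1}=1$ and $x_{n,|S|}=1$ in any feasible (indeed any admissible) service pattern. Consequently the first and last coordinates are frozen and only the $|S|-2$ intermediate coordinates $x_{n,2},\dots,x_{n,|S|-1}$ remain free, each ranging over $\{0,1\}$ independently; this produces exactly $2^{|S|-2}$ distinct vectors, which is the size of the search space that a brute-force solver enumerates.

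Then I would argue that the exhaustive search must actually visit each of these $2^{|S|-2}$ vectors: for every one it evaluates the induced variables via \eqref{eq1}--\eqref{eq13}, tests feasibility against \eqref{eq19}, computes the objective \eqref{eqobj}, and keeps the incumbent best. Since each such evaluation costs only a polynomial amount of work in $|S|$, the overall running time of exploring the whole solution space is $\Theta\!\big(2^{|S|-2}\cdot\mathrm{poly}(|S|)\big)$, i.e.\ exponential in the number of stops, which is what the lemma asserts.

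The step I expect to be the main obstacle is arguing that one genuinely needs to explore (a number of candidates proportional to) $2^{|S|-2}$ rather than fewer --- in particular, that the structure of the problem does not permit a polynomial-size reformulation or an effective pruning. I would handle this by pointing to the nonlinearity of the model: the $\max\{0,\gamma_{n,s}-g_n\}$ penalties and the bilinear products $x_{n,s}x_{n,y}$ in \eqref{eq1}, \eqref{eq4}, \eqref{eq6}, \eqref{eq7}, together with the mutual coupling of dwell times, headways and loads through \eqref{eq9}--\eqref{eq13}, make the objective non-separable across stops, so that the value of a partial assignment cannot in general be certified without completing it. Within the stated framework --- solving to global optimality by exhaustive search of the combinatorial solution space --- the relevant quantity is precisely the cardinality of that space, and the count $2^{|S|-2}$ established above is therefore both the number of solutions to be explored and the source of the exponential complexity.
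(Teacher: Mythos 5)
Your proposal is correct and follows essentially the same route as the paper's own proof: count the $2^{|S|}$ binary assignments of $x_{n,s}$ and then fix $x_{n,1}=x_{n,|S|}=1$ via \eqref{eq18}--\eqref{eq18b} to obtain $2^{|S|-2}$ candidates. The extra discussion of per-candidate evaluation cost and the impossibility of pruning is reasonable but not required by the paper, which treats the lemma purely as a count of the solution space explored by brute force.
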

\begin{proof}
The decision variables that determine the service pattern, $x_{n,s}$, can receive binary values. Formally, to find the globally optimal service pattern of a vehicle $n$, we need to explore a set of $2^{|S|}$ potential solutions because at each stop $s\in\{1,2,...,|S|\}$ we have two options: serve or skip ($x_{n,s}\in\{0,1\}$). Ergo, the potential solutions that need to be evaluated are $2^{|S|}$. Given that the first and the last stop of the line should be always part of the service pattern, this solution space is reduced to $2^{|S|-2}$.
\end{proof}

At this point we should note that, as also demonstrated in \trbcite{fu2003real}, solving this problem to global optimality with brute-force is possible in public transport lines with realistic sizes. If, however, the number of stops is unrealistically high then we cannot evaluate every potential service pattern with the use of brute-force and we need to resort to sub-optimal solution approximations with the use of heuristics. Fig.\ref{fig2} provides an indication of the increase of the solution space with the number of stops indicating also the possible computations that can be executed by the world's fastest supercomputer that can execute up to 33,860 trillion calculations per second. We note that we select the computation limit of one minute because the decision about the service patter of each vehicle should be made within a limited time during which the vehicle awaits to be dispatched. Under this condition, a solution can be computed for lines with up to 60 stops (see Fig.\ref{fig2}).

\begin{figure}[H]
\centering
\includegraphics[scale=1.0]{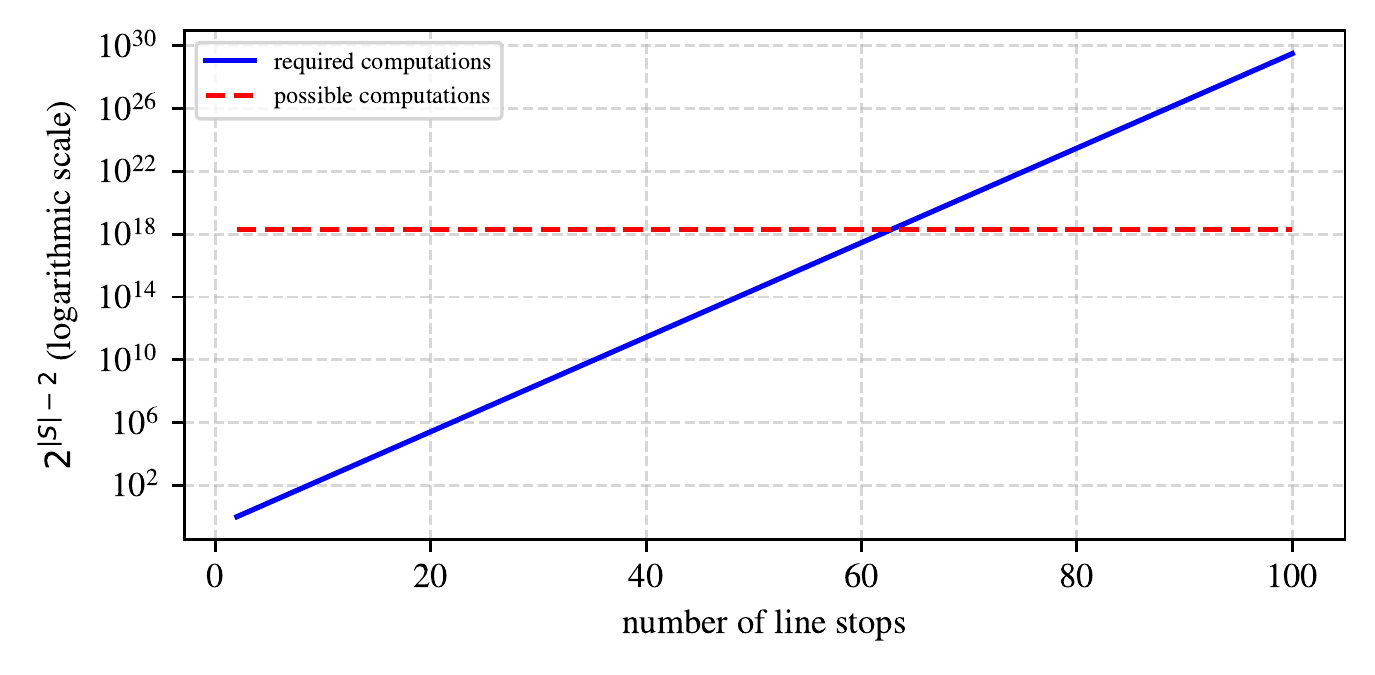}
\caption{Required solution evaluations when the size of the line, $|S|$, varies. \label{fig2}}
\end{figure}

\section{Numerical Experiments}\label{sec4}
\subsection{Case study description}
Our case study is bus line 9 in the Twente region operated by Keolis Nederland. The bus line connects two cities: Hengelo with 80 thousand inhabitants and Enschede with 160 thousand inhabitants. This line is selected because it also serves the university of Twente (UT) which is approximately in the middle of the two cities. The line consists of 13 stops per direction and the stops that accommodate the university (Enschede Kennispark / UT and Enschede Westerbegraafplaats / UT) are the 7th and 8th when considering the direction from Hengelo to Enschede. The topology of the line is presented in Fig.\ref{fig1}.

\begin{figure}[H]
\centering
\includegraphics[scale=1.2]{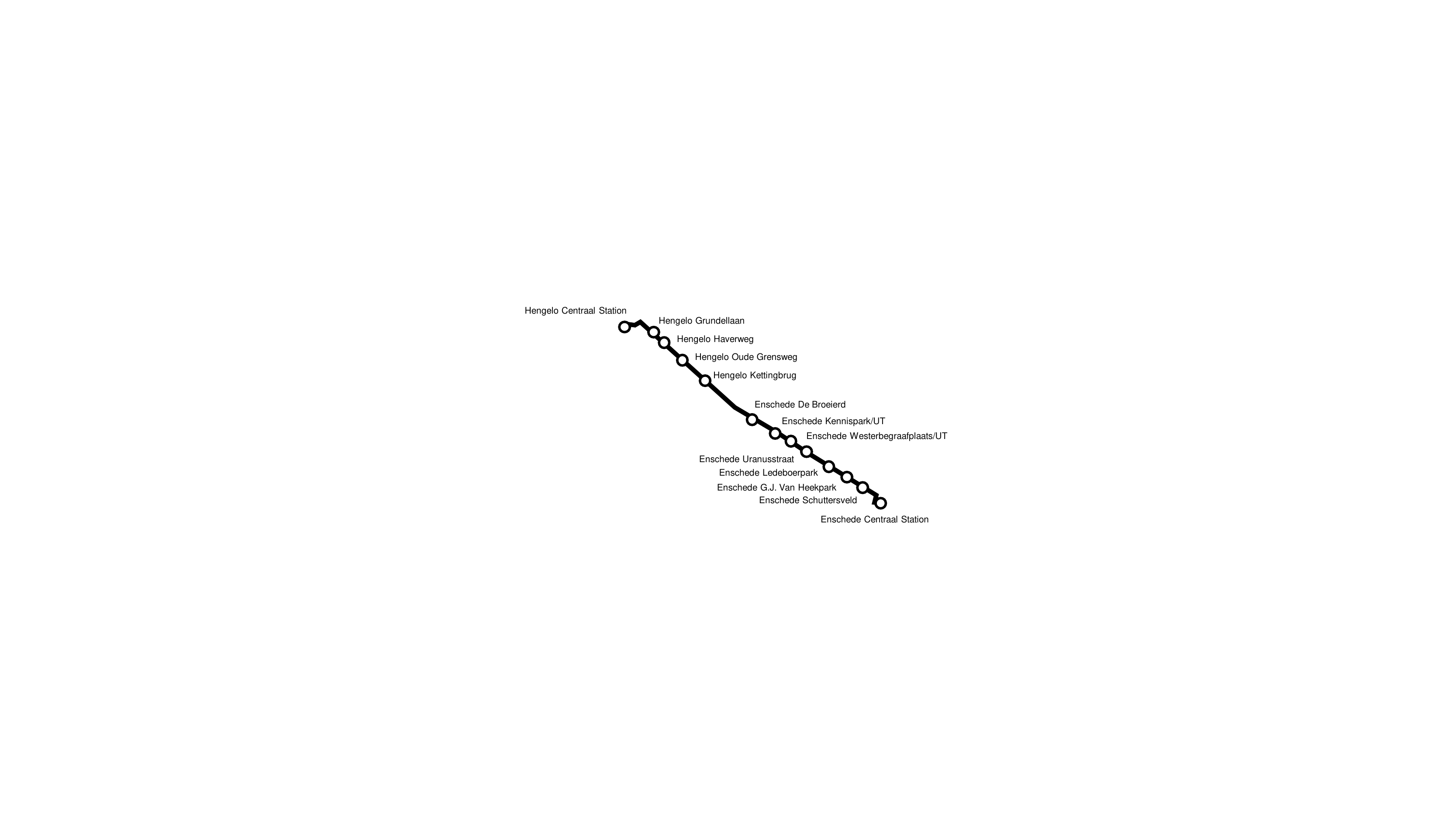}
\caption{Topology of bus line 9 in Twente, Netherlands\label{fig1}}
\end{figure}

The average trip travel time per direction is 16 minutes. In addition, the operational hours of the line service are presented in Table \ref{tab1}.

\begin{table}[H]
\centering
\caption{Operational hours of bus line 9 (line direction Hengelo centraal - Enschede centraal)\label{tab1}}
\begin{tabular}{lr}
\toprule
Day & Operational hours\\\midrule
Monday & 06:29 - 23:29\\
Tuesday & 06:29 - 23:29\\
Wednesday & 06:29 - 23:29\\
Thursday & 06:29 - 23:29\\
Friday & 06:29 - 23:29\\
Saturday & 07:29 - 23:29\\
Sunday & 10:29 - 23:29\\
\bottomrule
\end{tabular}
\end{table}

The average boarding time per passenger is $r_1=2$ seconds and the average alighting time $r_2=1$ second. In addition, when serving a stop a bus needs an average time of $\delta=20$ seconds for acceleration/deceleration.

\subsection{Evaluation}

Because the travel demand can vary across time, we generate a large number of passenger demand scenarios in our numerical experiments. For this, we focus at the time period 8:00-9:00 and we use the origin-destination demand presented in Table \ref{tab2} as follows: we generate 1000 origin-destination demand scenarios by sampling demand values from a \textit{restricted} normal distribution that restricts the sampling of negative demand values. This normal distribution uses as mean the presented hourly demand values in Table \ref{tab2}. As in \trbcite{welding1957instability,randall2007international}, it is hypothesized that the passenger arrivals at stops are random, following a uniform distribution because of the high service frequency (see assumption 1). During this time period, the planned headway among successive vehicles is 5 minutes. The pandemic-imposed capacity is $g_n=25$ passengers and, given that the vehicles have 33 seats, we assume a nominal vehicle capacity of 43 passengers including 10 standees.

\begin{table}[H]
  \centering
  \caption{Hypothesized hourly origin-destination matrix from 8:00-9:00 from which we generate one thousand origin-destination demand scenarios\label{tab2}}
    \begin{tabular}{lrrrrrrrrrrrrr}
    \toprule
    &\multicolumn{13}{c}{destination stops}\\\cmidrule{2-14}
      origin stops   & 1     & 2     & 3     & 4     & 5     & 6     & 7     & 8     & 9     & 10    & 11    & 12    & 13 \\\midrule
  1&  \ding{55}     & 4     & 8     & 8     & 12    & 12    & 8     & 8     & 12    & 8     & 4     & 16    & 22 \\
 2&   \ding{55}     & \ding{55}     & 2     & 4     & 4     & 8     & 8     & 4     & 8     & 12    & 20    & 12    & 26 \\
 3&   \ding{55}     & \ding{55}     & \ding{55}     & 2     & 2     & 2     & 16    & 12    & 8     & 8     & 16    & 20    & 16 \\
  4&  \ding{55}     & \ding{55}     & \ding{55}     & \ding{55}     & 4     & 4     & 8     & 12    & 16    & 20    & 4     & 12    & 28 \\
  5&  \ding{55}     & \ding{55}     & \ding{55}     & \ding{55}     & \ding{55}     & 2     & 4     & 4     & 10    & 10    & 4     & 14    & 14 \\
 6&   \ding{55}     & \ding{55}     & \ding{55}     & \ding{55}     & \ding{55}     & \ding{55}     & 4     & 2     & 4     & 12    & 6     & 10    & 16 \\
  7&  \ding{55}     & \ding{55}     & \ding{55}     & \ding{55}     & \ding{55}     & \ding{55}     & \ding{55}     & 2     & 2     & 2     & 6     & 12    & 24 \\
  8&  \ding{55}     & \ding{55}     & \ding{55}     & \ding{55}     & \ding{55}     & \ding{55}     & \ding{55}     & \ding{55}     & 4     & 4     & 2     & 4     & 18 \\
  9&  \ding{55}     & \ding{55}     & \ding{55}     & \ding{55}     & \ding{55}     & \ding{55}     & \ding{55}     & \ding{55}     & \ding{55}     & 2     & 4     & 8     & 22 \\
  10&  \ding{55}     & \ding{55}     & \ding{55}     & \ding{55}     & \ding{55}     & \ding{55}     & \ding{55}     & \ding{55}     & \ding{55}     & \ding{55}     & 2     & 6     & 24 \\
  11&  \ding{55}     & \ding{55}     & \ding{55}     & \ding{55}     & \ding{55}     & \ding{55}     & \ding{55}     & \ding{55}     & \ding{55}     & \ding{55}     & \ding{55}     & 4     & 6 \\
  12&  \ding{55}     & \ding{55}     & \ding{55}     & \ding{55}     & \ding{55}     & \ding{55}     & \ding{55}     & \ding{55}     & \ding{55}     & \ding{55}     & \ding{55}     & \ding{55}     & 2 \\
   13& \ding{55}     & \ding{55}     & \ding{55}     & \ding{55}     & \ding{55}     & \ding{55}     & \ding{55}     & \ding{55}     & \ding{55}     & \ding{55}     & \ding{55}     & \ding{55}     & \ding{55} \\
    \bottomrule
    \end{tabular}%
\end{table}%

In addition, we select a particular trip $n$ which is about to be dispatched and for which we should determine a service pattern. Note that its preceding trip $n-1$ serves all stops and, following the line schedule, the dispatching headway between trips $n$ and $n-1$ is 5 minutes. Note also that we run 1000 experiments with different origin-destination demand scenarios to obtain less biased results from several potential demand realizations. In our experiments, we compare the performance of the following models:

\begin{itemize}
\item[(a)] the \textit{as-is} case where trip $n$ serves all stops without applying a service pattern;
\item[(b)] the \textit{SPM - nominal capacity} service pattern model that does not consider the pandemic-imposed vehicle capacity by solving the model in Eqs.\eqref{eqobj}-\eqref{eq20} without considering the first term of the objective function, $M\sum_{s=1}^{|S|-1}\max\{0,(\gamma_{n,s}-g_n)\}$, and enforcing the nominal vehicle capacity of 43 passengers;
\item[(c)] the proposed \textit{SPS - pandemic capacity} service pattern model that considers the pandemic imposed vehicle capacity expressed in Eqs.\eqref{eqobj}-\eqref{eq20}.
\end{itemize}

To evaluate the performance of the aforementioned methods, we use the following key performance indicators:

\begin{itemize}
\item[(i)] $\mathcal{O}_1$: the total passenger load of vehicle $n$ that exceeds the pandemic-imposed capacity at all stops;
\item[(ii)] $\mathcal{O}_2$: the number of unserved passengers by vehicle $n$ that have to wait for the next vehicle;
\item[(iii)] $\mathcal{O}_3$: the total waiting time of passengers that will have to board the next vehicle, $n+1$.
\end{itemize}

The aforementioned key performance indicators will assess the benefits of our proposed \textit{SPS - pandemic capacity} model, while evaluating also its potential externalities, such as the increase in unserved passengers and the increase in passenger waiting times at stops for passengers that are refused to board vehicle $n$.

Our \textit{SPS - pandemic capacity} is programmed in Python 3.7 using a general-purpose computer with Intel Core i7-7700HQ CPU @ 2.80GHz and 16 GB RAM. The \textit{SPS - nominal capacity} is also programmed in Python 3.7 and its main difference is that it does not consider the pandemic capacity term $$M\sum\limits_{s=1}^{|S|-1}\max\{0,(\gamma_{n,s}-g_n)\}$$ in its objective function. For each one of the 1000 passenger demand scenarios, the optimal service pattern solutions of the \textit{SPS - pandemic capacity} and the \textit{SPS - nominal capacity} models are computed with brute-force by evaluating all possible service pattern alternatives. The most typical service pattern solutions that appeared in most of the scenarios are presented in Fig.\ref{fig3}.

\begin{figure}[H]
\centering
\includegraphics[scale=0.9]{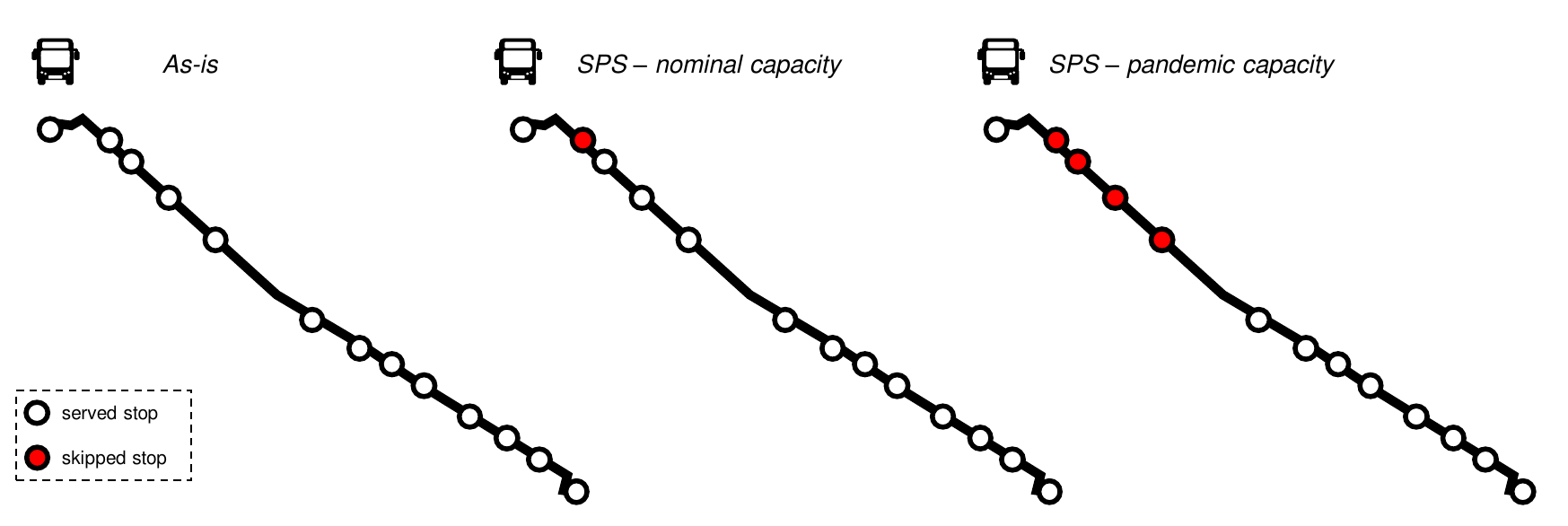}
\caption{Typically suggested service patterns based on each method. Note that \textit{SPS - pandemic capacity} results in an increased number of skipped stops.\label{fig3}}
\end{figure}

First, the results of our implementation with respect to the total in-vehicle passenger load that exceeds the pandemic-imposed capacity, $\mathcal{O}_1$, are presented in Fig.\ref{fig_boxplot_load}. Note that this key performance metric sums the number of passengers that exceed the pandemic-imposed capacity when the vehicle departs from every stop. Therefore, the same passenger can be counted more than once if he/she remains at vehicle $n$ at several stops while the pandemic-imposed capacity is exceeded. In Fig.\ref{fig_boxplot_load} we use the Tukey boxplot convention (see \trbcite{mcgill1978variations}) to present the results from the 1000 scenarios. In the Tukey boxplot, the upper and lower boundaries of the boxes indicate the upper and lower quartiles (i.e. 75th and 25th percentiles denoted as Q$_3$ and Q$_1$, respectively). The lines vertical to the boxes (whiskers) show the maximum and minimum values that are not outliers. The whiskers are determined by plotting the lowest datum still within 1.5 the interquartile range (IQR) Q$_3$-Q$_1$ of the lower quartile, and the highest datum still within 1.5 IQR of the upper quartile. 

\begin{figure}[H]
\centering
\includegraphics[scale=1]{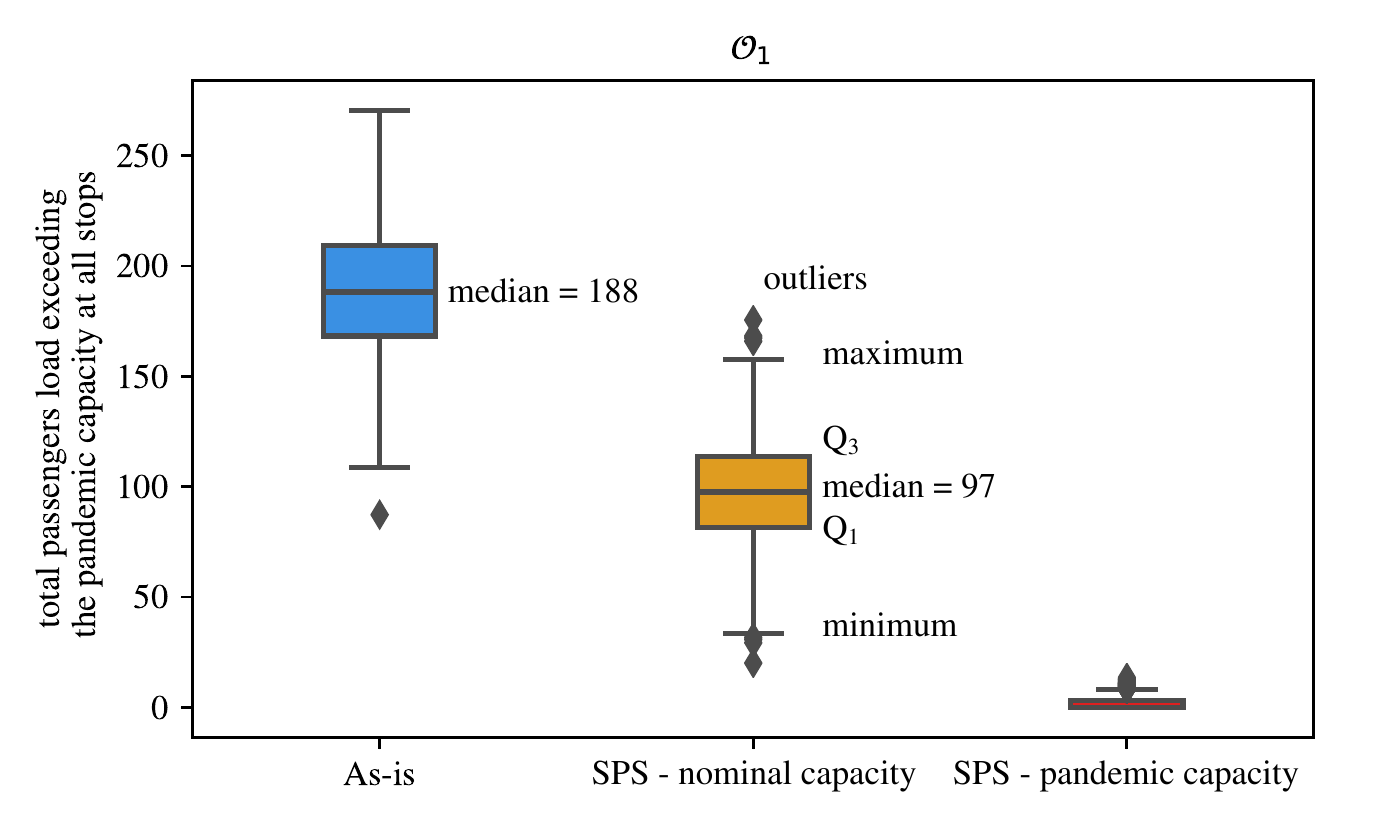}
\caption{Tukey boxplot of the performances of the \textit{as-is}, \textit{SPS - nominal capacity}, and \textit{SPS - pandemic capacity} designs in 1000 scenarios\label{fig_boxplot_load}}
\end{figure}

From Fig.\ref{fig_boxplot_load} one can note that the \textit{SPS - pandemic capacity} service pattern results in passenger loads that do not exceed the pandemic-imposed capacity in the vast majority of cases. This was partly expected because the objective function of this model penalizes heavily service patterns that might result in crowding levels above the pandemic-imposed capacity. The worst performance is observed at the \textit{as-is} design which does not skip any stops resulting in a median passenger load above the pandemic-imposed capacity of 188 passengers when considering the sum over all stops. One interesting observation is that the \textit{SPS - nominal capacity} service pattern, which only skips one stop (stop 2), improves this key performance indicator by almost 50\% compared to the as-is service pattern.

Besides this aggregate analysis, it is also important to investigate which stops are problematic leading to excessive passenger loads inside vehicle $n$. For this, we also report in Fig.\ref{fig_load_stations} the average values of the passenger loads from the 1000 scenarios after the departure of vehicle $n$ from each stop $s\in\langle 1,2,...13\rangle$.

\begin{figure}[H]
\centering
\includegraphics[scale=1]{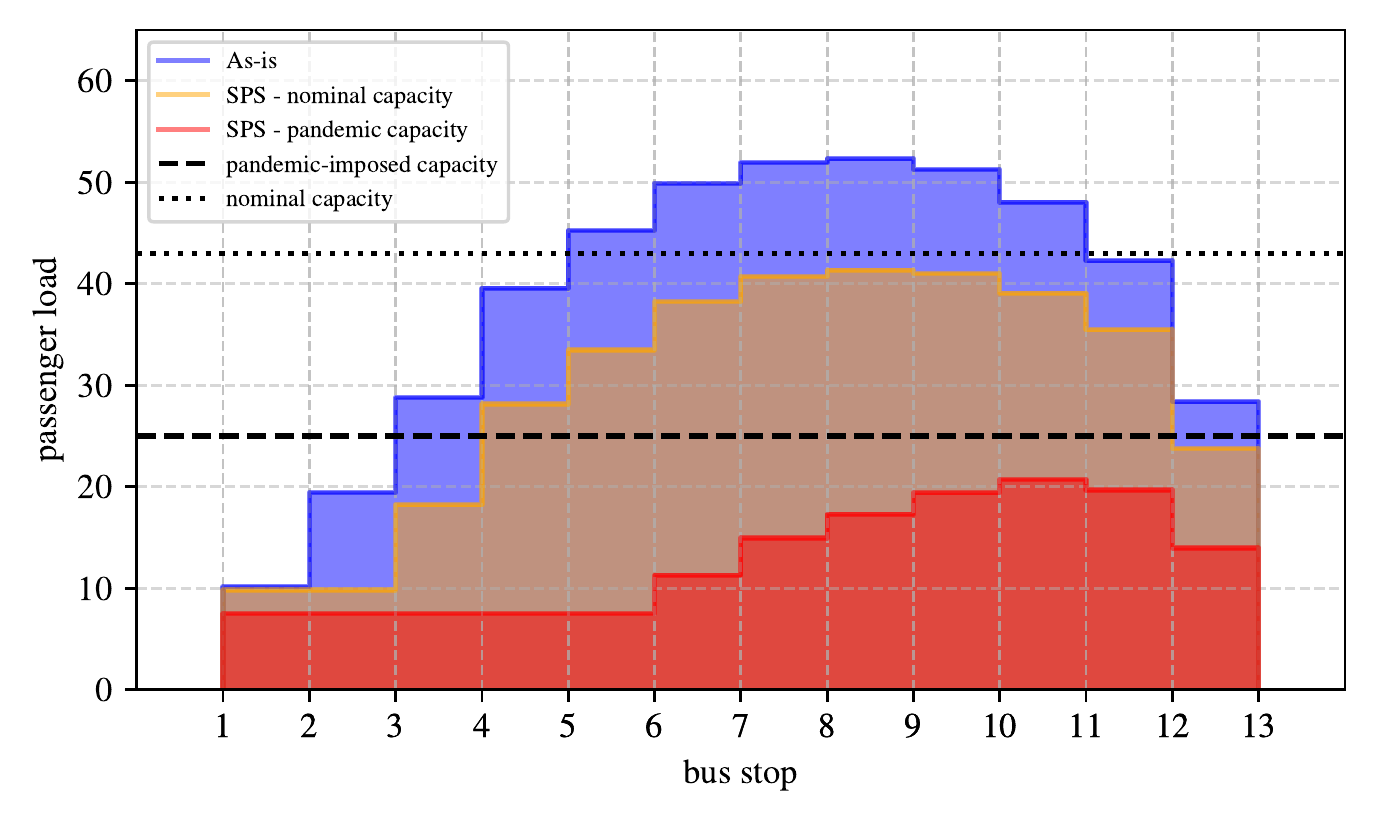}
\caption{Average passenger load when departing from each stop when implementing the service pattern with nominal capacity, with pandemic capacity, and the as-is design\label{fig_load_stations}}
\end{figure}

The results of Fig.\ref{fig_load_stations} help us to interpret the \textit{SPS - pandemic capacity} solution presented in Fig.\ref{fig3} that skips stops 2, 3, 4 and 5. From Fig.\ref{fig_load_stations} one can note that the vehicle load increases significantly due to the passenger boardings at stops 2, 3, 4 and 5. After stop 6, the passenger load increases are marginal and the passenger load starts to decrease for all designs after stop 10. Both the \textit{as-is} service pattern and the service pattern of the \textit{SPS - nominal capacity} exceed the pandemic-imposed capacity after departing from stop 4. In contrast, skipping stops 2, 3, 4 and 5 that exhibit the highest passenger inflows allow the \textit{SPS - pandemic capacity} to maintain a passenger load below the pandemic-imposed capacity at all stops. 

It is important to note that the as-is case results in passenger loads of more than 50 passengers for stops 6-9 increasing considerably the risk of virus transmission. In addition, this design exceeds the pandemic-imposed capacity limit when traveling from stop 3 until the end of the line. The \textit{SPS - nominal capacity} service pattern has an improved performance and exceeds the pandemic-imposed capacity limit when traveling from stop 4 until stop 12. It also results at more than 40 onboard passengers at the segment comprising stops 7-9. However, as expected, it remains below the nominal capacity limit of 43 passengers.

After the performance evaluation of the service patterns for the first key performance indicator $\mathcal{O}_1$ it is clear that our proposed \textit{SPS - pandemic capacity} performs significantly better than the alternative designs. This, however, comes at a cost because the \textit{SPS - pandemic capacity} service pattern skips four stops resulting in increased passenger waiting times and unserved demand. The number of unserved passengers might be an issue because we do not only refuse service to passengers who want to board at the skipped stops 2, 3, 4 and 5, but also to passengers that want to alight at these stops. To compare the numbers of unserved passengers by vehicle $n$, we implement the three service patterns in the same 1000 scenarios with different origin-destination demand and we report the aggregated results in the boxplots of Fig.\ref{fig_unserved_demand}.

\begin{figure}[H]
\centering
\includegraphics[scale=1]{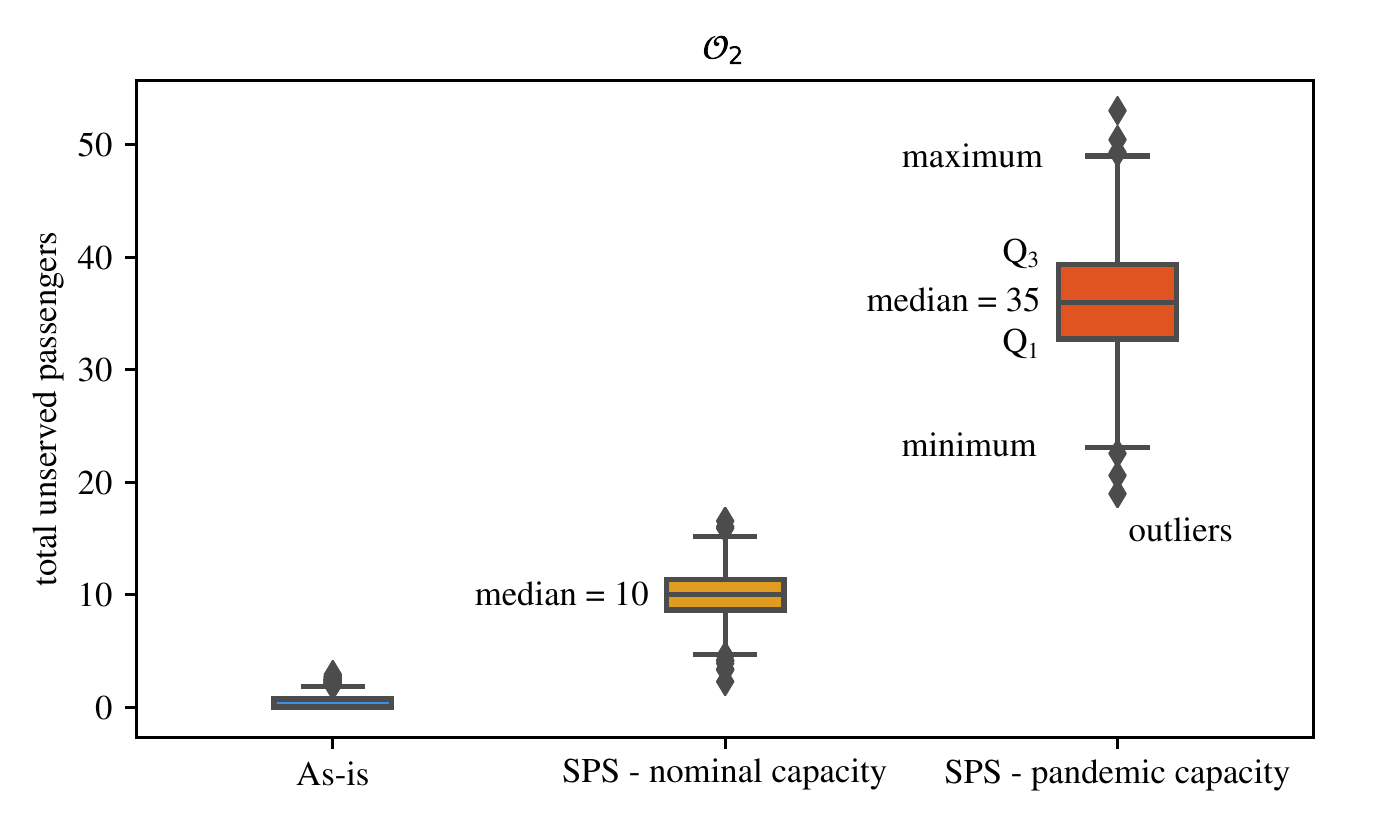}
\caption{Tukey boxplot of the unserved demand by vehicle $n$ for each service pattern implemented in 1000 scenarios with random passenger demand\label{fig_unserved_demand}}
\end{figure}

Interestingly, one can note that the \textit{SPS - pandemic capacity} service pattern results in a median of 35 unserved passengers by vehicle $n$ that can increase to up to 50 passengers in worst-case demand scenarios. This is a very useful result for public transport operators because it will help them to quantify the cost of unserved demand when trying to maintain the pandemic-imposed vehicle capacity. It is also of interest to understand which are the most problematic stops where passengers are refused boarding since this might lead to crowding at the waiting platforms. For this, we also report the average number of unserved passengers at each stop in Fig.\ref{fig_stranded_passengers}.

\begin{figure}[H]
\centering
\includegraphics[scale=1]{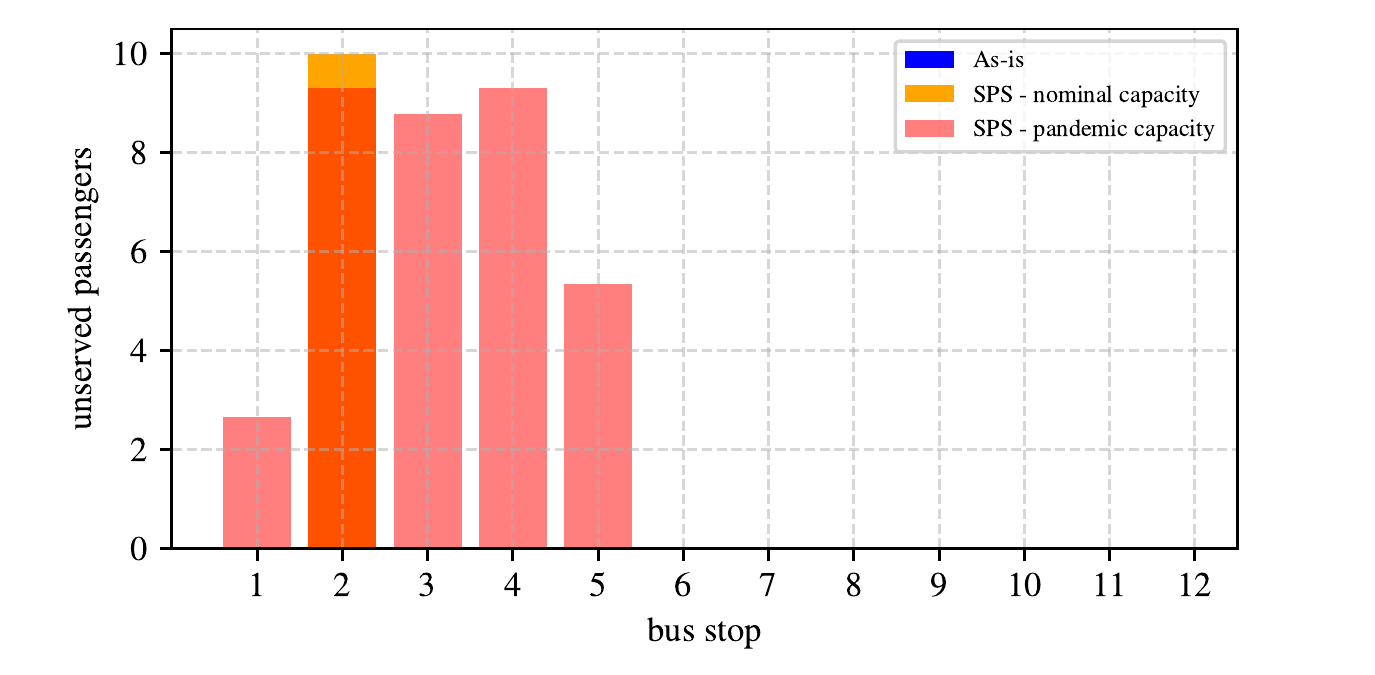}
\caption{Average number of unserved passengers by vehicle $n$ at each stop when implementing the service pattern with nominal capacity, with pandemic capacity, and the as-is design\label{fig_stranded_passengers}}
\end{figure}

Note that for the $\textit{SPS - pandemic capacity}$ case we have more than two unserved passengers at stop 1 even if stop 1 is not skipped by vehicle $n$, as can be seen in Fig.\ref{fig3}. The reason behind this is that passengers who want to alight at stops 2, 3, or 4 do not board vehicle $n$ because it skips those stops. Additionally, the number of unserved passengers in Fig.\ref{fig_stranded_passengers} appears as a real number although it should be a positive integer. The reason is that Fig.\ref{fig_stranded_passengers} reports the average value of unserved passengers by vehicle $n$ from the 1000 scenarios.

From Fig.\ref{fig_stranded_passengers} one can note that there are no unserved passengers after stop 6. This is expected since all designs do not skip any stops in the segment comprised of stops 6-12. The last stop of the line (13) is not plotted because it cannot be skipped. When implementing the \textit{SPS - nominal capacity} service pattern, all unserved passengers are at stop 2 - which is the only skipped stop by this service pattern. When implementing the \textit{SPS - pandemic capacity} though, we have unserved passengers at more stops (stops 1-5). This demonstrates that this service pattern does not only lead to more unserved passengers to satisfy the pandemic-imposed capacity limit, but it also leads to unserved passenger demand at more stops. 

The increased passenger waiting times expressed in the key performance indicator $\mathcal{O}_3$ are mostly analogous to the results of the unserved demand because the more unserved passengers, the higher their total waiting time. The results with respect to the key performance indicator $\mathcal{O}_3$ are presented in Fig.\ref{fig_waiting_times} and, as expected, are mostly in-line with the results of Fig.\ref{fig_unserved_demand}. The median of the total waiting time of unserved passengers is 180 minutes when implementing the service pattern \textit{SPS - pandemic capacity}. This corresponds to a waiting time of (approximately) 5.1 minutes for each unserved passenger by vehicle $n$. 

\begin{figure}[H]
\centering
\includegraphics[scale=1]{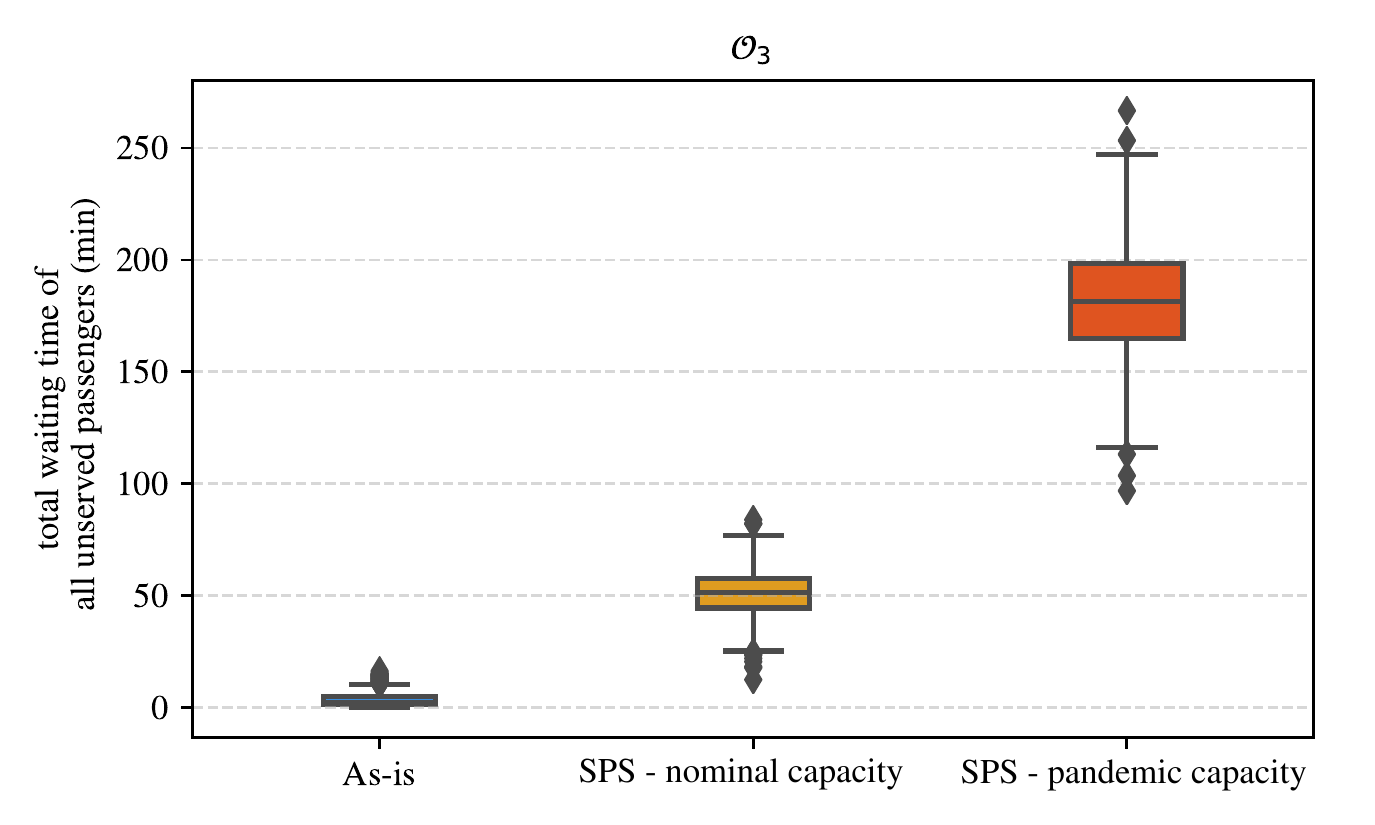}
\caption{Tukey boxplot of the extra passenger waiting times of unserved passengers by vehicle $n$.\label{fig_waiting_times}}
\end{figure}

\section{Concluding Remarks}\label{sec5}
In this work, we developed a service pattern model for determining dynamically the skipped stops of a public transport vehicle which is about to start its service in order to satisfy the pandemic-imposed capacity. Several public transport authorities in major cities have already used service patterns to avoid overcrowding but those decisions are made offline and are not vehicle-specific (e.g., they result in daily stop closures). Our service pattern model fills this research gap and can be applied in dynamic environments by using up-to-date estimations of passenger demand to devise patterns that avoid serving overcrowded stops. One of its benefits is that it can be applied in near real-time for public transport lines with realistic sizes to return an optimal solution.

Although service patterns are used by many public transport operators, one should consider that by skipping particular stops the number of unserved passengers and their waiting times might increase. To evaluate these negative effects resulting from the implementation of a pandemic-driven service pattern, we implement our model in a bus line connecting two cities with the university of Twente. We also implement the as-is service that does not skip any stop and a service pattern model that does not try to meet the pandemic-imposed capacity limit. The results of our evaluation in 1000 scenarios with different passenger demand levels demonstrated that the service pattern that considers the pandemic capacity can reduce the total passenger load that exceeds the pandemic capacity by 188 passengers per vehicle (or 15 passengers per vehicle per stop). Importantly, this service pattern is able to maintain a passenger load beyond the pandemic capacity limit in almost all cases. This, however, results in skipping a considerable number of stops (4 out of 13 stops) and a significant number of unserved passengers that can be up to 50 passengers per trip. 

The aforementioned values are based on the setting of our case study. Our service pattern modern that considers the pandemic-imposed capacity might perform significantly better in public transport lines with relatively low passenger demand because potential skipped stops will not result in many unserved passengers. Additionally, lines with small headways are more suitable for the implementation of service patterns since travelers will not have to wait for an extended period of time if they are skipped by a vehicle. To summarize, our decision support model can propose service patterns for different line services and can help public transport operators to assess the benefits and drawbacks of implementing pandemic-driven service patterns given their operational headways and their passenger demand levels.

In future research, our service pattern model can be expanded to consider more preferences from the operational side of a service line, such as the reduced vehicle travel times due to the skipped stops. One important research topic is also the combination of our model with dynamic frequency models that can increase the number of vehicles at particular time periods of the day to offer an increased vehicle supply to high-demand stops that are skipped by our service patterns. In case of limited vehicle availability from the public transport operator, on-demand services and shared mobility options can complement this supply gap by combining our model with models for on-demand scheduling or vehicle sharing.

\section*{Acknowledgements}
This work if funded by the Dutch Research Council (NWO) under the L4 project `COVID 19 Wetenschap voor de Praktijk', project number: 10430042010018.

\small
\bibliographystyle{trb}
\bibliography{interactcadsample}

\nolinenumbers
\end{document}